%%%%%%%%%%%%%%%%%%%%%%%%%%%%%%%%%%%%%%%%%%%%%%%%%%%%%%%%%%%%%%%%%%%%%%%%%%%%%%%%
%2345678901234567890123456789012345678901234567890123456789012345678901234567890
%        1         2         3         4         5         6         7         8

\documentclass[letterpaper, 10 pt, conference]{ieeeconf}  % Comment this line out if you need a4paper

\IEEEoverridecommandlockouts                              % This command is only needed if 
                                                          % you want to use the \thanks command

\overrideIEEEmargins                                      % Needed to meet printer requirements.

%In case you encounter the following error:
%Error 1010 The PDF file may be corrupt (unable to open PDF file) OR
%Error 1000 An error occurred while parsing a contents stream. Unable to analyze the PDF file.
%This is a known problem with pdfLaTeX conversion filter. The file cannot be opened with acrobat reader
%Please use one of the alternatives below to circumvent this error by uncommenting one or the other
%\pdfobjcompresslevel=0
%\pdfminorversion=4

% See the \addtolength command later in the file to balance the column lengths
% on the last page of the document
%%%%%%%%%%%%%%%%%%%%%%%%%%%%%%%%%%%%%%%%%%%%%%%%%%%%%%%%%%%%%%%%
%%%%%%%%%% MY PACKAGES %%%%%%%%%%%%%%%%%%%%%%%%%%%%%%%%%%%%%%%%%
\usepackage[utf8]{inputenc}
\usepackage{amsmath}
\usepackage{amssymb}
\usepackage{amsfonts}
\usepackage{bm}

\usepackage{amsthm}
\usepackage{overpic}
\usepackage{lipsum}

\usepackage{enumitem}
\usepackage{cases}
\usepackage{subcaption}
\usepackage{graphicx}
\usepackage{ulem,color}
\usepackage{cite}

\usepackage{algorithm, algorithmicx}
\usepackage{algpseudocode}

\algnewcommand{\Inputs}[1]{%
  \State \textbf{Inputs:}
  \Statex \hspace*{\algorithmicindent}\parbox[t]{.8\linewidth}{\raggedright #1}
}
\algnewcommand{\Initialize}[1]{%
  \State \textbf{Initialize:}
  \Statex \hspace*{\algorithmicindent}\parbox[t]{.8\linewidth}{\raggedright #1}
}

\DeclareMathOperator*{\argmax}{arg\,max}

{
    \theoremstyle{plain}
    \newtheorem{assumption}{Assumption}
}

\DeclareCaptionLabelFormat{lc}{\MakeLowercase{#1}~#2}

% \captionsetup{labelfont=sc,labelformat=lc}

\newtheorem{theorem}{Theorem}%[section]
%[theorem]

\newtheorem{definition}{Definition}%[section]
\newtheorem{proposition}{Proposition}

%%%%%%%%%%%%%%%%%%%%%%%%%%%%%%%%%%%%%%%%%%%%%%%%%%%%%%%%%%%%%%%

\title{\LARGE \bf
Hamilton's Rule for Enabling Altruism in Multi-Agent Systems %\vspace{-1ex}
}

\author{Brooks A. Butler and Magnus Egerstedt$^*$
\thanks{*Brooks A. Butler and Magnus Egerstedt are with the Department of Electrical Engineering and Computer Science at the University of California, Irvine. Emails: bbutler2@uci.edu and magnus@uci.edu.
This research was supported {in part} by an appointment to the Intelligence Community Postdoctoral Research Fellowship Program at the University of California, Irvine administered by Oak Ridge Institute for Science and Education (ORISE) through an interagency agreement between the U.S. Department of Energy and the Office of the Director of National Intelligence (ODNI). This work is also sponsored in part by the Air Force Research Laboratory, Munitions Directorate (RWTA), Eglin AFB, FL (Award No. FA8651-24-2-0001). Opinions, findings and conclusions, or recommendations are those of the authors and do not necessarily reflect the views of the sponsoring agencies.}
}

\begin{document}

\IEEEaftertitletext{\vspace{-1.5\baselineskip}}

\maketitle
\thispagestyle{empty}
\pagestyle{empty}

%%%%%%%%%%%%%%%%%%%%%%%%%%%%%%%%%%%%%%%%%%%%%%%%%%%%%%%%%%%%%%%%%%%%%%%%%%%%%%%%
\begin{abstract}
This paper explores the application of Hamilton’s rule to altruistic decision-making in multi-agent systems. Inspired by biological altruism, we introduce a framework that evaluates when individual agents should incur costs to benefit their neighbors. By adapting Hamilton’s rule, we define agent ``fitness" in terms of task productivity rather than genetic survival. We formalize altruistic decision-making through a graph-based model of multi-agent interactions and propose a solution using collaborative control Lyapunov functions. The approach ensures that altruistic behaviors contribute to the collective goal-reaching efficiency of the system. We illustrate this framework on a multi-agent way-point navigation problem, where we show through simulation how agent importance levels influence altruistic decision-making, leading to improved coordination in navigation tasks.
\end{abstract}

\section{Introduction}
One of reasons why multi-robot systems have been proposed for deployment in risky or uncertain environments is {their} inherent robustness to failures of individual robots \cite{hazon2008redundancy,yan2013survey}. If one robot breaks down, there are still N-1 robots remaining who can carry on with the mission. This is in contrast to single robot deployments, where such malfunctions are mission-ending. But, if such robustness is a desired feature, one can take this one step further and ask if there are situations in which individual robots should sacrifice themselves for the good of the team, or at least {volunteer to} perform high-risk maneuvers?

This idea of incurring a purposeful cost (such as risk) for individual robots is reminiscent of the idea of altruism, whereby individuals or organisms perform acts that are costly to themselves to benefit a receiver organism or organisms. In ecology, Hamilton's rule \cite{hamilton1963evolution} has been formulated to encode when an altruistic act is beneficial from {the} vantage point of ``genetic fitness". Although genetic fitness might not be particularly relevant in a multi-robot setting, the underpinning idea of trading off costs and benefits across members of the team relative to the importance of the individual tasks is meaningful. In this paper, we formalize what Hamilton's rule might look like for robotics and showcase its operation in a collaborative task assignment scenario.

Risk-aware planning for teams of robots has been approached previously in several applications, with examples including human-robot collaborative teams \cite{jiang2022risk,kwon2020humans,li2023risk}, where robots can collaborate with humans to asses and react to risky environments, as well as risk-aware path planning \cite{xiao2020robot}, target tracking \cite{mayya2022adaptive}, and decision-making \cite{palomares2016collaborative}. In contrast to these past approaches, the novelty of this paper is the explicit coupling between purposeful risk-taking, or deteriorating performance, and algorithmic altruism for evaluating when decisions that are detrimental to the individual agent should be made towards the benefit of related agents. Additionally, approaches to altruistic decision making in multi-agent systems have been investigated in scenarios where agents might consider balancing an individual cost with a social/group cost when agents share a common goal \cite{ruadulescu2020multi,lu2021swarm,zamora1998learning,toghi2022social}. However, in this paper, we leverage the framework inspired by Hamilton's rule that weighs the costs/benefits of altruistic actions in the context of how agent {goals} relate to each other, even in the face of individualized goals, where a greater degree of {goal-}relatedness {will} influence the level of altruism an agent is willing to enact to benefit another.

The outline of this paper is as follows. In Section~\ref{sec:hamilton_rule}, we introduce Hamilton's rule from the perspective of ecology and propose how this framework can be applied to autonomous multi-agent. We then formalize our altruistic decision-making framework in the context of using a graph to model multi-agent systems with coupled dynamics in Section~\ref{sec:problem_formulation}. We then propose a solution to this formulation using collaborative control via the definition of collaborative control Lyapunov functions (CCLFs) in Section~\ref{sec:altruism_via_collab}, and present a result on the total goal-reaching of a collaborating system with respect to the relative importance of each agent. Finally, we illustrate our altruistic control framework on a simulated multi-agent way-point navigation problem and provide brief conclusions in Sections~\ref{sec:simulations} and \ref{sec:conclusion}, respectively.

\section{Hamilton's Rule for Autonomous Systems} \label{sec:hamilton_rule}

% \subsection{Inclusive Fitness for Multi-Agent Systems}
In ecology, \textit{genetic fitness} is used to quantify the reproductive success of a given organism \cite{orr2009fitness,reed2003correlation}. Perhaps most recognizably, this notion of fitness is used to describe evolutionary phenomena such as natural selection through \textit{survival of the fittest}, which suggests that genes that increase the likelihood of survival in an individual organism are more likely to create copies of themselves through reproduction, therefore increasing their genetic dominance in a population~\cite{endler1986natural}. \textit{Inclusive fitness} contrasts with this purely individualistic model of genetic success by suggesting that organisms with the same, or similar, genes may take \textit{altruistic} actions that support the survival of each other, even at a cost to themselves, to theoretically enhance the genetic fitness of both the recipient of the act and the altruistic organism. Hamilton's rule \cite{hamilton1963evolution} underpins this theory of inclusive fitness by deriving a condition under which altruistic genes are likely to propagate throughout a given population,     
\begin{equation} \label{eq:hamiltons_rule}
    r_{ij} B_j(u_i) \geq C_i(u_i),
\end{equation}
where, typically, $C_i(u_i)$ is the reproductive cost to organism $i$ of a given choice $u_i$, $B_j(u_i)$ is the benefit of the choice $u_i$ to organism $j$, and $r_{ij}$ is the genetic relatedness between the to organisms, which, in the ecological setting, {means} the probability that $i$ and $j$ share the same genes.

Taking inspiration from these {ecological} principles, 
% we are interested in applying similar notions of altruistic decision-making in autonomous multi-agent systems. 
we are motivated to emulate emergent strategies observed in nature that reward cooperation in complex systems of independent organisms. In short, we hope to provide a framework that can replicate the benefits of altruistic decision-making in ecological systems, which raises the overall fitness of a group of related organisms, to increase the total fitness of related agents in autonomous multi-agent systems.
However, transferring principles of genetic fitness from biological to autonomous systems requires a shift in how we measure success for an autonomous agent versus a biological organism. Rather than using fitness to describe the \textit{fecundity} (i.e., reproductive rate) of an agent, we instead consider agent fitness as a measure of \textit{productivity} (i.e., task/goal completion rate \cite{nguyen2023mutualistic,nguyen2024scalable}). Under this conceptual re-framing, we examine the components of altruistic decision-making, as modeled by Hamilton's rule, in the context of multi-agent systems: namely, relating agent inputs to costs and benefits on the productivity of other agents (i.e., defining $B_j(u_i)$ and $C_i(u_i)$), and defining measures for relatedness with respect to agent tasks/goals (i.e., deriving $r_{ij}$).     

\subsection{Conditions for Altruism in Multi-Agent Systems}
To define conditions where altruistic decision-making may be beneficial for overall productivity, we first highlight some core properties that should be satisfied in the definition of the cost/benefit framework. First, note that, in the ecological setting, the input/choice $u_i$ in \eqref{eq:hamiltons_rule} is generally implied and typically binary (i.e., either organism~$i$ helps or does not help organism~$j$), where the units of $B_j(u_i)$ and $C_i(u_i)$ are usually given in reproductive value, or number of offspring. For example, we might consider the benefit of a lioness feeding her sister's cub that would otherwise starve at the cost of some food/resources that would otherwise go to her own healthy cub \cite{Lotha_2022,clutton2013social}. In this scenario, $u_i$ would be the choice by lioness~$i$ to feed the cub of lioness~$j$, where $B_j(u_i) = 1$ offspring for lioness~$j$ and the cost could be $C_i(u_i)=\frac{1}{4}$ offspring for lioness~$i$ (i.e., some loss in resources for the healthy cub that could reduce the chance of survival). Additionally, since lioness $j$'s offspring would share at most half of its genes with lioness $i$, the final piece of \eqref{eq:hamiltons_rule} is given as $r_{ij}=0.5$, making the decision $u_i$, in this case, beneficial for the inclusive fitness of both lionesses.

However, in most autonomous systems, the input $u_i$ is generally designed to be explicit with a range of inputs (e.g. some vector of real values, $u_i \in \mathcal{U}_i \subset \mathbb{R}^{M_i}$), where it is conceivable that one input could be marginally better than another, although both could be technically beneficial. For example, in a collision avoidance scenario, does agent~$i$ divert from its desired heading by a small or large degree to give the right-of-way to agent~$j$? Furthermore, the units of $B_j(u_i)$ and $C_i(u_i)$ for all agents must also be consistent with respect to the impact that the input $u_i$ has on neighboring agent productivity, where a definition of ``productivity units" may not be immediately obvious. 

By way of example, consider a multi-agent system of {planar} single integrator agents, where the position of each agent is given by $x_i \in \mathbb{R}_i^2$ and {the} agent dynamics by
\begin{equation}
    \dot{x}_i = u_i.
\end{equation}
For $n$ agents, where $[n] = \{1, \dots, n\}$ denotes the set of agent indices, let each agent $i \in [n]$ be given a waypoint $x_i^* \in \mathbb{R}^2$, where each agent's goal is to navigate from their current position to said waypoint.
One way we can evaluate the current productivity of a given agent is by the rate which they approach their given goal,
\begin{equation}\label{eq:productivity_single_int}
    P_i(x_i, u_i) = (x_i^* - x_i)^\top u_i.
\end{equation}
Thus, assuming an input $u_i \in \mathcal{U}_i \subset \mathbb{R}^2$ {with finite magnitude}, an agent will be ``maximally productive" when their dynamics point them directly towards the given waypoint $x_i^*$ (i.e., maximizing the inner product between $(x_i^* - x_i)$ and $u_i$), with any $u_i$ that decreases the distance from $x_i^*$ resulting in positive productivity, and any $u_i$ that increases the distance from $x_i^*$ resulting in negative productivity. In this sense, we could use $P_i(x_i, u_i)$ as a candidate for a productivity cost function for agent~$i$ with respect to a given input $u_i$
\begin{equation}\label{eq:C_i_candidate_example}
    C_i(x_i, u_i) = -P_i(x_i, u_i).
\end{equation}

An essential component remains, however, to evaluate the effect of agent decisions on each other (i.e., $B_j(u_i)$), which is a mechanism, or constraint, that dictates agent behaviors with respect to their neighbors. In the current example, we can use collision avoidance, or relative agent spacing, as a coupling dynamic between agents. Since we are interested in how agents would \textit{react} to avoid potential collisions (rather than guaranteeing the collision avoidance itself) let us induce coupling dynamics between agents within a defined distance that encourages each agent to be spaced at least some distance $\Delta > 0$ apart. Define the neighborhood of agent~$i$ to be the following set of agents,
\begin{equation}
    \mathcal{N}_i = \{ j \in [n] \setminus \{i\}: \Vert x_i - x_j \Vert \leq \Delta \},
\end{equation}
where we want agents within $\Delta$ of each other to naturally distance themselves away from each other. In this sense, one could interpret agents within $\Delta$ of each other to be ``uncomfortable", but not necessarily unsafe. Thus, we can add an agent repulsion term to the dynamics of all agents
\begin{equation} \label{eq:uncomfortable_dyn}
    \dot{x}_i = u_i + u_{\mathcal{N}_i}(x_i, x_{\mathcal{N}_i}),
\end{equation}
where, {for example,}
\begin{equation} \label{eq:repelling_feild}
    u_{\mathcal{N}_i}(x_i, x_{\mathcal{N}_i}) = \gamma \sum_{j \in \mathcal{N}_i}  \frac{x_i - x_j}{\Vert x_i - x_j \Vert^2},
\end{equation}
{which} simulates a repelling potential field for all agents within $\Delta$ of each other, with scalar parameter $\gamma > 0$ to dictate the repulsion strength, where closer agents to each other have a stronger desire to move apart. While the choice of \eqref{eq:repelling_feild} is not unique \cite{zhang2010cooperative,sun2017collision}, as any similar repelling potential field mechanism would suffice for this example, we choose an inverse square law for simplicity, as it may also be viewed analogously to the repelling effect that same-signed charges have on each other (although not identically, since $u_{\mathcal{N}_i}(x_i, x_{\mathcal{N}_i})$ is not a force, but a velocity command).

% \begin{remark}
%     Note that the coupling behavior of ``uncomfortable" distances does \textbf{not} guarantee collision avoidance for agents. However, safety filters for collision avoidance in multi-robot systems may be readily applied in this scenario~\cite{wang2017safety}, so long as the ``safe" distance is less than the ``comfortable" distance. 
% \end{remark}

Under these dynamics, the effect of neighboring agents on the goal-reaching of agent~$i$ is given by
\begin{equation}
    Q_i(x_i, x_{\mathcal{N}_i}) = \gamma (x_i^* - x_i)^\top \left( \sum_{j \in \mathcal{N}_i}  \frac{x_i - x_j}{\Vert x_i - x_j \Vert^2} \right),
\end{equation}
where $Q_i(x_i, x_{\mathcal{N}_i}) > 0$ implies that agent~$i$ is moved closer to $x_i^*$ and $Q_i(x_i, x_{\mathcal{N}_i}) \leq 0$ implies that agent~$i$ is not moved closer to $x_i^*$, due to the sum effect of all neighbors, respectively. Therefore, we can compute the effect that a single neighbor $j \in \mathcal{N}_i$ has on the goal-reaching of agent~$i$ as
\begin{equation}\label{eq:dQdxdt}
    \frac{\partial Q_i}{\partial x_j} \frac{\partial x_j}{ \partial t} = \frac{\partial Q_i}{\partial x_j} (u_j + u_{\mathcal{N}_j}(x_j, x_{\mathcal{N}_j}) ).
\end{equation}

Consider when there are only two agents $i$ and $j$ whose initial conditions and given waypoints cause the agents to cross within $\Vert x_i - x_j \Vert \leq \Delta$ of each other. 
In this case, we could pose a candidate for the benefit that agent~$i$ has on the goal-reaching of agent~$j$ as
\begin{equation}\label{eq:B_j_candidate_example}
    B_j(x_i, x_j, u_i) = \frac{\partial Q_j(x_i, x_j)}{\partial x_i} (u_i + u_{\mathcal{N}_i}(x_i, x_j)). 
\end{equation}
We should note that, even in this simplified example, the unit consistency of $C_i$ and $B_i$ is somewhat problematic, where the units of \eqref{eq:C_i_candidate_example} are clearly in the first-order dynamics of agent~$i$, whereas the units of \eqref{eq:B_j_candidate_example} are in the second-order dynamics for agent~$j$, that is, \eqref{eq:B_j_candidate_example} computes how $Q_j$ changes (which is computed with respect to the first-order coupling dynamics) under the dynamics of agent~$i$. However, noting this mismatch in units, we can still illustrate altruism in this example using the proposed candidates for $C_i$ and $B_j$, and will address this challenge more thoroughly in Sections~\ref{sec:problem_formulation} and \ref{sec:altruism_via_collab}. 
Thus, without considering relatedness between agents $i$ and $j$ (or by assuming $r_{ij} = 1$), by \eqref{eq:hamiltons_rule}, \eqref{eq:C_i_candidate_example}, and \eqref{eq:B_j_candidate_example} we have that any input $u_i$ that satisfies

\small
\begin{equation} \label{eq:alt_cond_example}
    \begin{aligned}
       \left( (x_i^* - x_i)^\top + \frac{\partial Q_j(x_i, x_j)}{\partial x_i} \right) u_i  \geq -\frac{\partial Q_j(x_i, x_j)}{\partial x_i}u_{\mathcal{N}_i}(x_i, x_j)
    \end{aligned}    
\end{equation}
\normalsize
will be at least equally beneficial to the goal-reaching of agent~$j$ as it is {detrimental} to agent~$i$. 

In Figure~\ref{fig:red_eq_blue}, we show an example of this condition applied to two agents, where $r_{ij}=r_{ji}=1$, whose given goals are to swap positions, with each agent attempting to maximize \eqref{eq:productivity_single_int} subject to the dynamics in \eqref{eq:uncomfortable_dyn} and the altruism condition \eqref{eq:alt_cond_example}. In this case, we see that a scenario that would normally result in a deadlock between both agents, with the $\argmax_{u_i \in \mathcal{U}_i} P_i(x_i, u_i)  = - u_{\mathcal{N}_i}(x_i, x_{\mathcal{N}_i})$, instead resolves the conflict by filtering individualistic actions according to~\eqref{eq:alt_cond_example}.
However, in order to fully implement the notion of altruistic decision-making in this example, we also require a method for determining agent relatedness in a productivity-centered context, which is described in the following section.

\subsection{Relatedness in Multi-Agent Systems}
Changing the units of agent success from fecundity to productivity also suggests a shift in agent-relatedness to a goal-centered perspective. 
% We consider two ways that agent's goals may be related: 1) relative goal similarity, and 2) relative goal importance.
% \subsubsection{Relatedness by Goal Similarity}
% Bearing a closer resemblance to genetic relatedness, consider agent relatedness according to goal similarity as
% \begin{equation}
%     r_{ij} = G(t_i,t_j),
% \end{equation}
% where $G(t_i, t_j) \mapsto [-1,1]$ maps the goals $t_i$ and $t_j$ of agents~$i$ and $j$, respectively, to a similarity value between -1 and 1, where $r_{ij} = 1$ indicates identical goals, $r_{ij} = -1$ indicates directly opposing goals, and $r_{ij} = 0$ indicates unrelated goals between agents~$i$ and $j$, respectively.  
% {[EXAMPLE]}
% \subsubsection{Relatedness by Goal Importance}
% Another 
One candidate for deriving agent relatedness is by considering the relationship of relative task importance between agents, computed as
\begin{equation}
    r_{ij} = \frac{w_j}{w_i},
\end{equation}
where $w_i,w_j \in \mathbb{R}_{>0}$ are used to denote a positive scalar importance value of the task/goal for agents~$i$ and $j$, respectively. 
This importance level could represent a constant priority assigned to the completion of agent~$i$'s task/goal; however, one might also imagine a time-varying importance $w_i(t)$, where importance levels are dictated by a dynamic environment with updating information. In either case, for simplicity, we assume that these values are given and reflect the relationship of agent goals with respect to each other.

Additionally, note that, in contrast with the concept of genetic relatedness which constrains the relatedness of agents to the set $r_{ij} \in [0,1]$, relative importance allows for $r_{ij} \in \mathbb{R}_{>0}$, where $r_{ij} = 1$ implies the goals of agent~$i$ and $j$ are equally important, $0 < r_{ij} < 1$ implies agent~$i$'s goal is more important relative to agent~$j$'s goal, and vice versa for $r_{ij} > 1$. 
We implement this notion of relative importance in our agent swapping example in Figures~\ref{fig:red_g_b} and \ref{fig:red_g_b_offset}, where, when $w_{\text{blue}} < w_{\text{red}}$, we see that the blue agent actually reverses its movement (despite its goal) to accommodate for the red agent until they are both able to navigate around each other, with the blue agent taking the larger detour.  

% \begin{figure}
%     \centering
%     \begin{subfigure}[b]{0.235\textwidth}
%          \includegraphics[width=\textwidth]{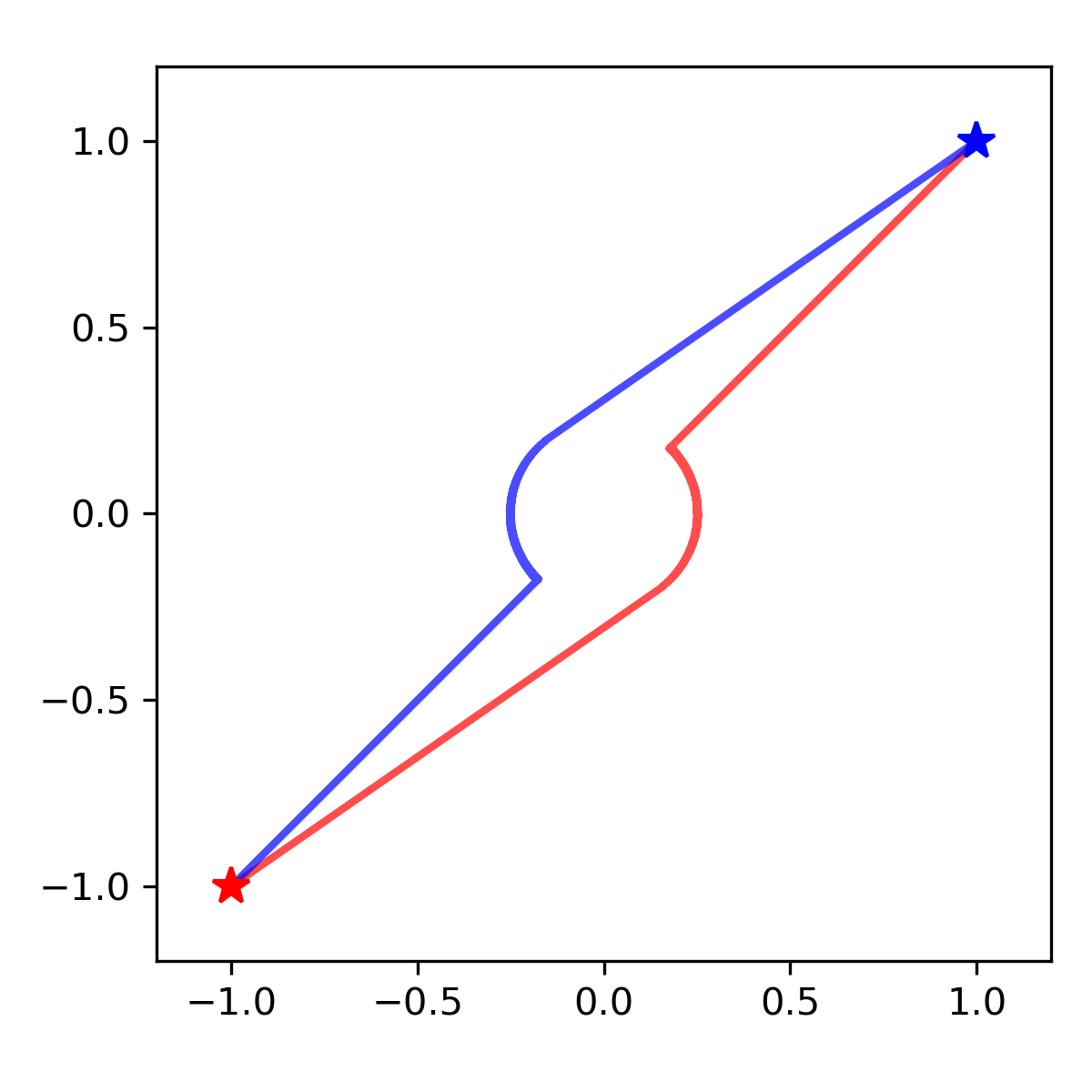}
%          \caption{$w_{\text{blue}} = w_{\text{red}}$}
%          \label{fig:red_eq_blue}
%     \end{subfigure}
%     \begin{subfigure}[b]{0.235\textwidth}
%          \includegraphics[width=\textwidth]{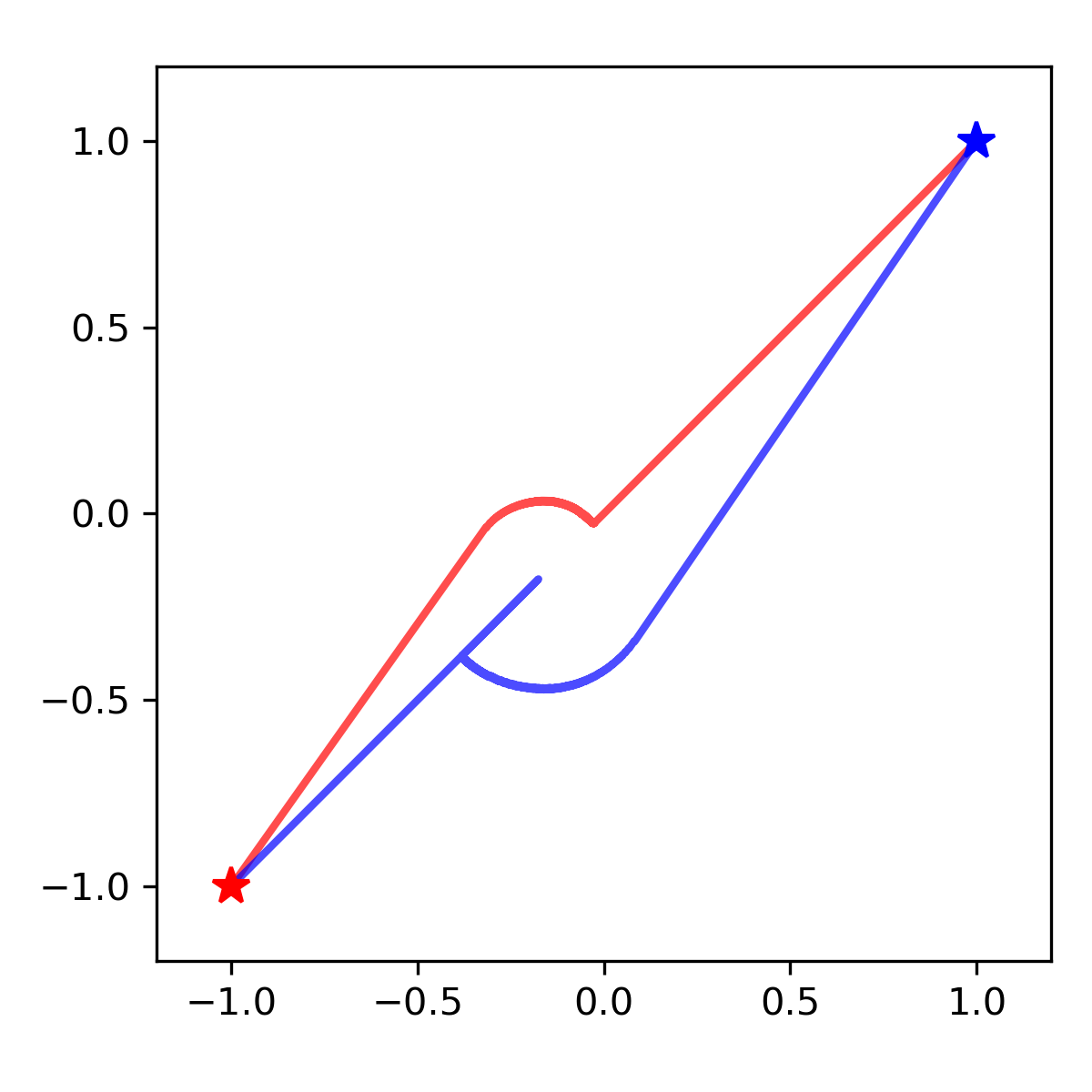}
%          \caption{$w_{\text{blue}} < w_{\text{red}}$}
%          \label{fig:red_g_b}
%     \end{subfigure}
%     \caption{
%     {
%     An example of the altruism condition from \eqref{eq:alt_cond_example} applied to two agents with goals to swap positions. We plot the trajectories of both agents, where in Figure~\ref{fig:red_eq_blue} the importance values are equal and in Figure~\ref{fig:red_g_b} the importance of the blue agent is set to be less than the red agent.
%     }
%     }
%     \label{fig:hamilitons_rule_example}
% \end{figure}

\begin{figure}
    \centering
    \begin{subfigure}[b]{0.32\columnwidth}
         \includegraphics[width=\textwidth]{figures/hamiltons_rule_example_swap_equal.png}
         \caption{$w_{\text{blue}} = w_{\text{red}}$}
         \label{fig:red_eq_blue}
    \end{subfigure}
    \begin{subfigure}[b]{0.32\columnwidth}
         \includegraphics[width=\textwidth]{figures/hamiltons_rule_example_swap.png}
         \caption{$w_{\text{blue}} < w_{\text{red}}$}
         \label{fig:red_g_b}
    \end{subfigure}
    \begin{subfigure}[b]{0.32\columnwidth}
         \includegraphics[width=\textwidth]{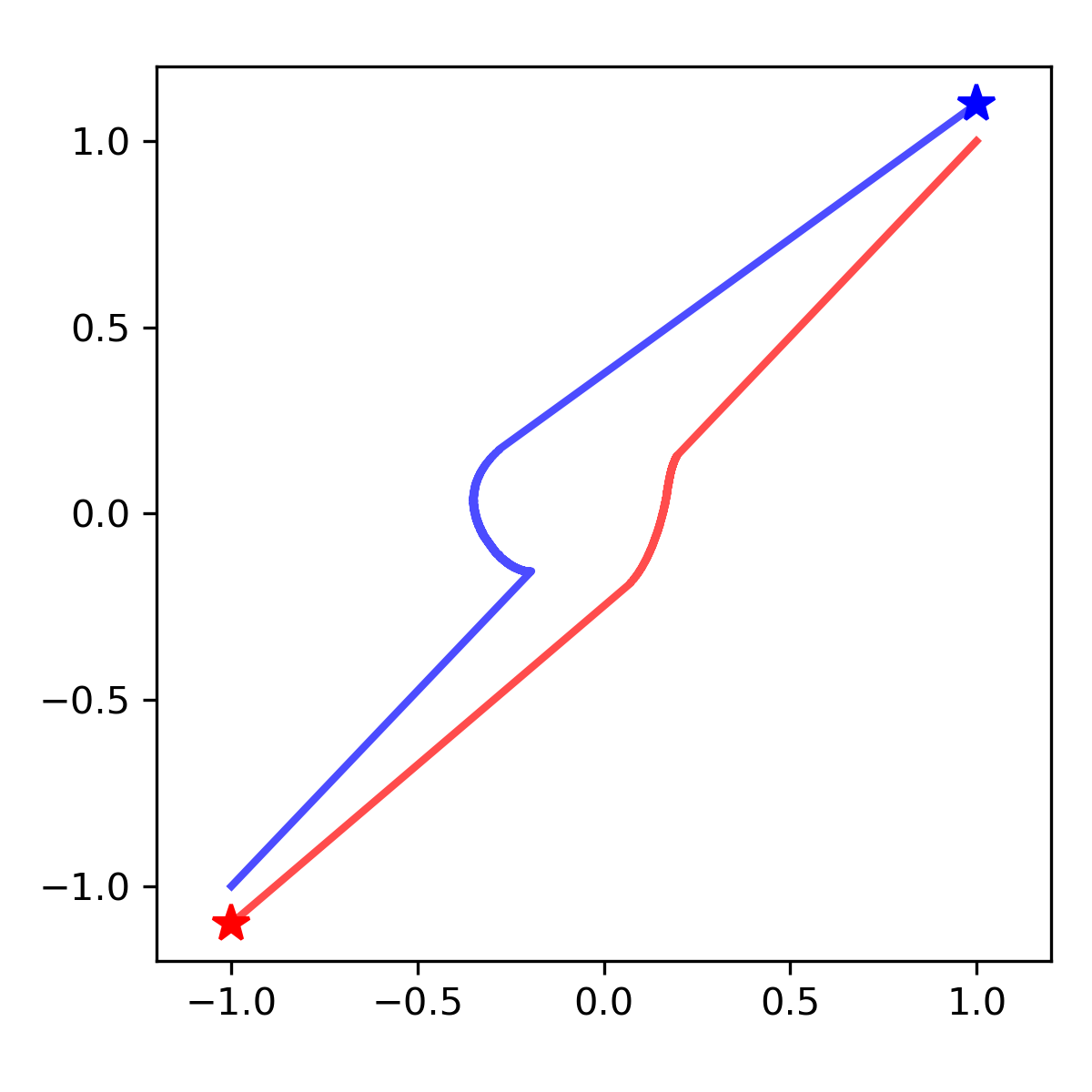}
         \caption{$w_{\text{blue}} < w_{\text{red}}$}
         \label{fig:red_g_b_offset}
    \end{subfigure}
    \caption{
    An example of the altruism condition from \eqref{eq:alt_cond_example} applied to two agents with goals to swap positions. We plot the trajectories of both agents, where in Figure~\ref{fig:red_eq_blue} the importance values are equal and in Figure~\ref{fig:red_g_b} the importance of the blue agent is set to be less than the red agent, with Figure~\ref{fig:red_g_b_offset} showing an example with agents swap positions slightly offset.
    }\label{fig:hamilitons_rule_example}
\end{figure}

\section{Problem Formulation} \label{sec:problem_formulation}
We now seek to formalize altruistic decision-making for non-linear multi-agent systems. As noted in Section~\ref{sec:hamilton_rule}, a key component of determining the effect of agents' decisions on each other is considering the mechanism by which agents affect each other's dynamics. 
If we consider agents to be distinct entities with some effect on the dynamics of neighboring agents, one way we can model these effects is by using a graph, where nodes in the graph represent individual agents and edges between agents indicate a dynamic effect between said agents. Note that there is a distinction between \textit{{graph dynamics}} or a \textit{networked dynamic system} (i.e., a graph that describes the dynamic effects of agents on each other) and a \textit{communication graph} (i.e., a graph that represents which agents may communicate between each other), where the latter is commonly used throughout the literature to facilitate coordination and/or cooperation in multi-agent systems \cite{li2014designing,you2011network,wen2012consensus}.
We define a networked dynamic system using a graph $\mathcal{G} = (\mathcal{V}, \mathcal{E})$, where $\mathcal{V}$ is the set of $n = \vert \mathcal{V} \vert$ nodes and $\mathcal{E} \subseteq \mathcal{V}\times \mathcal{V}$ is the set of edges.

A defining characteristic of a graph $\mathcal{G}$ is whether it is \textit{directed} or \textit{undirected}, where a graph is undirected if an edge from node~$i$ to $j$ also implies an edge from node~$j$ to $i$. While communication graphs are typically assumed to be undirected, this may not always be the case for networked dynamic systems. For example, consider the case where a team of bicyclists form a platoon to create a draft that optimizes effort expenditures for trailing teammates, where a bicyclist has a dynamic effect on all trailing teammates, while trailing teammates do not {necessarily} have a dynamic effect on those in the lead.
In this sense, let $\mathcal{N}_i^+$ be the set of all neighbors $j \in [n] \setminus \{i\}$ that have an interaction with the dynamics node~$i$, where 
\begin{equation}
    \mathcal{N}_i^+ = \{j \in [n]\setminus \{i\}: (i,j) \in \mathcal{E} \}.
\end{equation}
Similarly, all nodes $j \in [n] \setminus \{i\}$ whose dynamics are affected by node~$i$ are given by
\vspace{-1ex}
\begin{equation}
    \mathcal{N}_i^- = \{j \in [n]\setminus \{i\}: (j,i) \in \mathcal{E} \},
\end{equation}
with the complete set of neighboring nodes given by
\begin{equation}
    \mathcal{N}_i = \mathcal{N}_i^+ \cup \mathcal{N}_i^-.
\end{equation}

\noindent
Further, we define the state vector for each node as $x_i \in \mathbb{R}^{N_i}$, with $N = \sum_{i \in [n]} N_i$ being the state dimension of the entire system, $N_i^+ = \sum_{j \in \mathcal{N}_i^+} N_j$ the combined dimension of incoming neighbor states, and $x_{\mathcal{N}_i^+} \in \mathbb{R}^{N_i^+}$ denoting the combined state vector of all incoming neighbors.  Then, for each agent~$i \in [n]$, we can describe its state dynamics, which are potentially nonlinear, time-invariant, and control-affine, as
\begin{equation} \label{eq:net_dyn_sys}
    \dot{x}_i = f_i(x_i, x_{\mathcal{N}_i^+}) + g_i(x_i) u_i,
\end{equation}
where $f_i:\mathbb{R}^{N_i + N_i^+} \rightarrow \mathbb{R}^{N_i}$ and $g_i:  \mathbb{R}^{N_i} \rightarrow \mathbb{R}^{N_i} \times \mathbb{R}^{M_i}$ are locally Lipschitz for all $i \in [n]$, and $u_i \in \mathcal{U}_i \subset \mathbb{R}^{M_i}$. 
For notational compactness, given a node~$i \in [n]$, we collect the 1-hop neighborhood state as $\mathbf{x}_i = (x_i, x_{\mathcal{N}_i})$ and the 2-hop neighborhood state as $\mathbf{x}_i^+ = (x_i, x_{\mathcal{N}_i}, x_{\mathcal{N}_j}:\forall j \in \mathcal{N}_i)$. Although the form of \eqref{eq:net_dyn_sys} is typically used to model systems with natural, or uncontrolled, coupling dynamics (e.g., power systems, networked epidemics, physically coupled systems, etc.), it can also be useful for modeling altruism in systems with induced coupling behavior (e.g., collision avoidance, formation control, coverage control, etc.), where $u_i$ may be viewed as a modification to the induced behavior  as  
\begin{equation}\label{eq:induced_net_dyn}
    \dot{x}_i = f_i(x_i) + g_i(x_i)(u_{\mathcal{N}_i}(\mathbf{x}_i)+u_i).
\end{equation}
We can then rewrite \eqref{eq:induced_net_dyn} identically to \eqref{eq:net_dyn_sys} as 
\begin{equation} \label{eq:induced_coupled_dynamics}
    \dot{x}_i = \bar{f}_i(\mathbf{x}_i) + g_i(x_i)u_i
\end{equation}
where
\begin{equation}
    \bar{f}_i(\mathbf{x}_i) = f_i(x_i) + g_i(x_i)u_{\mathcal{N}_i}(\mathbf{x}_i).
\end{equation}

In this way, we contextualize altruistic behavior around the effects of coupling protocols, both natural and induced, which are the mechanisms that agents use to affect one another. A benefit of this formulation is that it allows us to think separately from the design of a specific coupling behavior, either natural or induced, between agents; however, a drawback to this approach is that it requires the induced coupling control law $u_{\mathcal{N}_i}(\mathbf{x}_i)$ to be differentiable in $\mathbf{x}_i$ for all agents. This need for a differentiable $u_{\mathcal{N}_i}(\mathbf{x}_i)$ arises from needing to consider how the dynamics of a given agent affect the goal-reaching of another through the coupling dynamics, where, in our example from Section~\ref{sec:hamilton_rule}, this manifests in the computation of $\frac{\partial Q_i}{\partial x_j} \frac{\partial x_j}{ \partial t}$ in \eqref{eq:dQdxdt}.

Thus, using the framing of networked dynamic systems, we propose a framework for deriving a condition under which altruistic actions are beneficial/advantageous for the entire multi-agent system given some measure of productivity or goal completion for all agents. Since we incorporate the networked dynamics into our formulation, this goal is described formally as a constrained control problem
\begin{equation} \label{eq:problem_statement}
    \begin{aligned}
        \min_{u_i} & \quad \sum_{i \in [n]} \Vert u_i \Vert \\
        \text{s.t.} & \quad \dot{x}_i = f_i(\mathbf{x}_i) + g_i(x_i)u_i  \\
        & \quad  C_i(\mathbf{x}_i, u_i) - \sum_{j \in \mathcal{N}_i^-} r_{ij} B_j(\mathbf{x}_j,\mathbf{x}_i, u_i) \leq 0; \forall i \in [n], 
    \end{aligned}
\end{equation}
where $C_i$ describes the cost of a given input to the goal-reaching of agent~$i$ and $B_j$ describes the benefit of $u_i$ to the goal-reaching of agent~$j$, with $r_{ij}$ given as the relative importance of agent~$j$'s task with respect to the task of agent~$i$. 
Additionally, as noted in Section~\ref{sec:hamilton_rule}, we see, under the framing of networked dynamics, that the effect of neighboring agent \textit{dynamics} (including the actions of neighbors $u_j$ for $j \in \mathcal{N}_i$) do not appear explicitly in the first-order dynamics of agent~$i$, but rather in the second-order dynamics. Thus, to maintain unit consistency, both $C_i$ and $B_j$ should describe the cost and benefit of actions $u_i$ in the second-order dynamics of each agent, which will also require the 2-hop neighborhood state $\mathbf{x}_i^+$ for each agent.

\section{Altruism Through Collaborative Control} \label{sec:altruism_via_collab}
In this section, we propose a method for constructing the cost/benefit functions $C_i(\mathbf{x}_i, u_i)$ and $B_j(\mathbf{x}_j,\mathbf{x}_i, u_i)$ necessary for solving \eqref{eq:problem_statement}. However, we must first define a context for evaluating the costs of agent actions. One method we can use to quantify the cost of input $u_i$ agent~$i$ is by evaluating the effect input $u_i$ has on the instantaneous progress of agents toward their given goal. In the following, we show how this notion of instantaneous progression can be viewed naturally through the lens of control Lyapunov functions (CLFs) \cite{sontag1983lyapunov,freeman1996clf,ogren2001control,ames2014rapidly} and how, by extending CLFs into the second-order dynamics of a networked dynamic system, we can describe the effect of agent's control inputs on neighbor's goals. 

\subsection{Control Lyapunov Functions}
For each agent~$i \in [n]$, let $V_i:\mathbb{R}^{N_i} \rightarrow \mathbb{R}_{\geq 0}$ describe a candidate control Lyapunov function, where the mission goal, defined with respect to each agent, is at $V_i(x_i) = 0$
We can verify that $V_i$ is a valid control Lyapunov function for a given agent~$i$ if the following definition holds. Let $D^r$ denote the set of functions $r$-times continuously differentiable in all arguments.
\begin{definition}[Control Lyapunov Function \cite{ogren2001control}] \label{def:clf}
    Given $f_i, g_i \in  D^{\infty}$ and $x_0 \in \mathbb{R}^{N_i}$. Then there exists a feedback law $u_i = \alpha_i(x_i)$ (smooth everywhere except at $x_0$, where it is continuous), which globally, asymptotically stabilizes $\dot{x}_i = f_i(\mathbf{x}_i) + g_i(x_i)u_i$ to $x_0$ if and only if there exists a control Lyapunov function, i.e. a smooth and positive definite function $V_i(x_i,x_0)$ (zero only at $x_0$) such that the following holds:
    \begin{equation}
        \frac{\partial V_i}{\partial x_i}g_i(x_i) = 0 \implies \frac{\partial V_i}{\partial x_i}f_i(\mathbf{x}_i) < 0
    \end{equation}
\end{definition}

Given that the Lie derivative of the function $V:\mathbb{R}^{N} \rightarrow \mathbb{R}$ with respect to the vector field generated by $f:\mathbb{R}^N \rightarrow \mathbb{R}^N$ is defined as
\begin{equation}
    \mathcal{L}_f V(x) = \frac{\partial V(x)}{\partial x} f(x),
\end{equation}
we can express the time-derivative of $V_i$, which indicates whether agent~$i$ approaches its goal, as
\begin{equation}
    \dot{V}_i(\mathbf{x}_i, u_i) = \mathcal{L}_{f_i}V_i(\mathbf{x}_i)+ \mathcal{L}_{g_i}V_i(x_i)u_i.
\end{equation}
Furthermore, we can guarantee that agent~$i$ reaches its goal at an exponential rate if the following condition holds
\begin{equation}\label{eq:CLF_sigma}
    \dot{V}_i(\mathbf{x}_i, u_i) + \sigma_i(V_i(x_i))  \leq 0
\end{equation}
holds, where $\sigma_i(\cdot)$ is a class~$\mathcal{K}$ function that determines the rate of convergence \cite{ames2014rapidly}. 

\subsection{Collaborative Control Lyapunov Functions}
Using the CLF condition from \eqref{eq:CLF_sigma}, we can interpret $\dot{V}_i + \sigma_iV_i \leq 0$ as agent~$i$ approaching its goal (i.e., a negative cost, or a positive benefit) and $\dot{V}_i + \sigma_i V_i > 0$ as agent~$i$ moving away from its goal (i.e., a positive cost). Thus, we could potentially use the CLF condition as a candidate for the cost of an action $u_i$ with respect to the productivity (or goal approaching) of agent~$i$,
\begin{equation*}
    C_i(\mathbf{x}_i, u_i) \approx \dot{V}_i(\mathbf{x}_i, u_i) + \sigma_i(V_i(x_i)).
\end{equation*}
However, using this candidate raises the issue of how to then design $B_j$ to reflect how $u_i$ also affects agent~$j$ using the same units (i.e., within the first order dynamics of agent~$j$). In fact, by turning our attention to $\dot{V}_i$, we see that only the \textit{states} of neighboring agents in $\mathcal{N}_i^+$ shows up in this expression. Therefore, we can not evaluate the impact of neighbor's inputs $u_j$ on $\dot{V}_i$ in the first-order dynamics.

However, intuitively, neighbors' choices clearly can have an impact on the instantaneous goals of agent~$i$. We can see this effect explicitly by examining the second-order dynamics of $V_i$
\vspace{-2ex}

\footnotesize
\begin{equation} \label{eq:ddot_Vi}
    \begin{aligned}
        \ddot{V}_i(\mathbf{x}_i^+, u_i, \dot{u}_i, u_{\mathcal{N}_i^+}) &= \sum_{j \in \mathcal{N}_i^+} \big[ \mathcal{L}_{f_j} \mathcal{L}_{f_i}  V_i(\mathbf{x}_i, \mathbf{x}_j) +  \mathcal{L}_{g_j} \mathcal{L}_{f_i}  V_i(\mathbf{x}_i) u_j \big] \\
         &
         % \quad 
         + \mathcal{L}^2_{f_i} V_i(\mathbf{x}_i) +  u_i^\top \mathcal{L}^2_{g_i} V_i(x_i) u_i + \mathcal{L}_{g_i}V_i(x_i)\dot{u}_i \\ 
         &
         % \quad 
         + \big(\mathcal{L}_{f_i} \mathcal{L}_{g_i}V_i(\mathbf{x}_i)^\top +  \mathcal{L}_{g_i} \mathcal{L}_{f_i}V_i(\mathbf{x}_i)\big)u_i,
    \end{aligned}
\end{equation}
\normalsize
where we define high-order Lie derivatives with respect to the same vector field $f$ with a recursive formula, where $k>1$, as
\begin{equation}
    \mathcal{L}^k_f h(x) = \frac{\partial \mathcal{L}^{k-1}_f V(x)}{\partial x} f(x)
\end{equation}
and we compute the Lie derivative of $V$ along the vector field generated by $f$ and then along the vector field generated by $g$ as
\begin{equation}
    \mathcal{L}_g \mathcal{L}_f V(x) = \frac{\partial}{\partial x}\left(\frac{\partial V(x)}{\partial x} f(x)\right)g(x).
\end{equation}

Using the second-order dynamics of $V_i$, we can now construct a high-order CLF condition that describes the convergence of agent~$i$ to its goal as a function of both $u_i$ and $u_j$ for $j \in \mathcal{N}_i^+$. We do so by defining a series of functions, similarly to the construction of a \textit{high-order control barrier function} \cite{tan2021high,xiao2021high,butler2024collaborative,butler2024resilience}

\small
\begin{equation} \label{eq:HOCLF_funcs}
    \begin{aligned}
        \phi^0_i(x_i) &= V_i(x_i) \\ 
        \phi^1_i(\mathbf{x}_i, u_i) &= \dot{\phi}^{0}_i(\mathbf{x}_i, u_i) + \sigma^1_i(\phi^{0}_i(x_i)) \\
        \phi^2_i(\mathbf{x}_i^+, u_i, \dot{u}_i, u_{\mathcal{N}_i^+}) &= \dot{\phi}^{1}_i(\mathbf{x}_i^+, u_i, \dot{u}_i, u_{\mathcal{N}_i^+}) + \sigma^2_i(\phi^{1}_i(\mathbf{x}_i, u_i)),
    \end{aligned}
\end{equation}
\normalsize

\noindent
where $\sigma^1_i(\cdot)$ and $\sigma^2_i(\cdot)$ are class-$\mathcal{K}$ functions. Note that $\phi^1_i(\mathbf{x}_i, u_i)$ is identical to the left-hand side of \eqref{eq:CLF_sigma}, and that $\phi^2_i(\mathbf{x}_i^+, u_i, \dot{u}_i, u_{\mathcal{N}_i^+})$ contains \eqref{eq:ddot_Vi}, which includes both the control inputs of neighbors in $\mathcal{N}_i^+$ and $\dot{u}_i$. We make the following simplifying assumption.   

\begin{assumption} \label{assume:u_dot_func_u}
    Let $\dot{u}_i := \tilde{d}(u_i) + \epsilon$, where $\tilde{d}(u_i): \mathbb{R}^{M_i} \rightarrow \mathbb{R}^{M_i}$ is a locally-Lipschitz approximation of $\dot{u}_i$, given a choice of $u_i \in \mathbb{R}^{M_i}$, with some finite approximation error $\epsilon \in \mathbb{R}^{M_i}$, where $\Vert \epsilon \Vert$ is sufficiently small.
\end{assumption}
Since implementing more sophisticated methods for handling/estimating $\dot{u}$ in high-order certificate-based control methods is an area of active research \cite{ong2024rectified} (and therefore out of the scope of this paper), we assume that $\dot{u}_i$ can be estimated given an input $u_i$ with some reasonably small error. Thus, using \eqref{eq:HOCLF_funcs} and Assumption~\ref{assume:u_dot_func_u}, we are prepared to define collaborative control Lyapunov functions as follows.

\begin{definition}[Collaborative Control Lyapunov Function] \label{def:cclf}
    For a given agent~$i$, under Assumption~\ref{assume:u_dot_func_u}, $V_i$ is a collaborative control Lyapunov function (CCLF) if there exists positive constants $c_1,c_2 > 0$ and class-$\mathcal{K}$ functions $\sigma^1_i(\cdot), \sigma^2_i(\cdot)$ such that for all $\mathbf{x}_i^+$ the following inequalities hold
    \begin{align}
        c_1\Vert x_i \Vert^2 \leq V_i(x_i) \leq c_2\Vert x_i \Vert^2, \label{eq:cclf_exp_stab_cond} \\
        \inf_{u_i, u_{\mathcal{N}_i^+}} \phi^2_i(\mathbf{x}_i^+, u_i, u_{\mathcal{N}_i}, \epsilon) \leq 0 \label{eq:collab_conv_cond}
    \end{align}
\end{definition}
\noindent
Using this definition, we obtain the following expected result.
\begin{proposition}
    If $V_i$ is a CCLF, then there exists a set of control inputs $u_i(t)$ and $u_{\mathcal{N}_i}(t)$ that globally exponentially stabilize $V_i$ at the origin.
\end{proposition}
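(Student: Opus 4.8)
The plan is to lift the classical implication ``CLF $\Rightarrow$ exponentially stabilizing feedback'' (Definition~\ref{def:clf}, read in the exponential sense of \eqref{eq:CLF_sigma}) through the cascade $\phi^0_i\to\phi^1_i\to\phi^2_i$ of \eqref{eq:HOCLF_funcs}. First, condition \eqref{eq:collab_conv_cond} says precisely that for every $\mathbf{x}_i^+$ the set $S(\mathbf{x}_i^+)=\{(u_i,u_{\mathcal{N}_i^+}):\phi^2_i(\mathbf{x}_i^+,u_i,u_{\mathcal{N}_i^+},\epsilon)\le 0\}$ is nonempty (using Assumption~\ref{assume:u_dot_func_u} to substitute $\dot u_i=\tilde{d}(u_i)+\epsilon$). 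I would then pick a joint feedback $(u_i,u_{\mathcal{N}_i^+})=\kappa_i(\mathbf{x}_i^+)\in S(\mathbf{x}_i^+)$, locally Lipschitz off the origin and continuous at it (in the affine-in-control case this is Sontag's universal formula, as in Definition~\ref{def:clf}), so that the closed loop \eqref{eq:induced_coupled_dynamics} is well posed and $\phi^2_i\le 0$ holds pointwise along every resulting trajectory.

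Next I would propagate $\phi^2_i\le 0$ down the cascade with the comparison lemma, specializing to linear class-$\mathcal{K}$ gains $\sigma^1_i(s)=\lambda_1 s$, $\sigma^2_i(s)=\lambda_2 s$ (which is what an exponential rate requires). By definition of $\phi^2_i$ we have $\dot\phi^1_i(t)=\phi^2_i(t)-\lambda_2\phi^1_i(t)\le-\lambda_2\phi^1_i(t)$, so Gr\"onwall gives $\phi^1_i(t)\le \max\{\phi^1_i(0),0\}\,e^{-\lambda_2 t}$ (the set $\{\phi^1_i\le 0\}$ is forward invariant, and on $\{\phi^1_i>0\}$ the bound is direct). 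Since $\phi^1_i=\dot V_i+\lambda_1 V_i$, this is the first-order inequality $\dot V_i(t)+\lambda_1 V_i(t)\le \max\{\phi^1_i(0),0\}\,e^{-\lambda_2 t}$, and a second application of the comparison lemma yields $0\le V_i(t)\le(a\,V_i(0)+b\,|\dot V_i(0)|)\,e^{-\lambda t}$ with $\lambda=\min\{\lambda_1,\lambda_2\}$ and $a,b>0$ depending only on $\lambda_1,\lambda_2$ (the resonant case $\lambda_1=\lambda_2$ absorbed into a marginally smaller rate). Hence $V_i(t)\to 0$ exponentially from every initial condition.

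Converting this to the state, \eqref{eq:cclf_exp_stab_cond} gives $\|x_i(t)\|^2\le V_i(t)/c_1$, together with $V_i(0)\le c_2\|x_i(0)\|^2$ and a bound $|\dot V_i(0)|\le L\|x_i(0)\|$ near the origin (from $\tfrac{\partial V_i}{\partial x_i}(0)=0$ and $\|\tfrac{\partial V_i}{\partial x_i}\|=O(\|x_i\|)$ by the quadratic bound and smoothness, the local Lipschitzness of $f_i,g_i$, and local boundedness of $\kappa_i$); this gives $\|x_i(t)\|\le M\|x_i(0)\|e^{-(\lambda/2)t}$, i.e. global exponential stabilization of $V_i$ (and of $x_i$).

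The step I expect to be the real difficulty, rather than bookkeeping, is the feedback extraction: $\phi^2_i$ is \emph{not} affine in $u_i$ --- \eqref{eq:ddot_Vi} carries the quadratic term $u_i^\top\mathcal{L}^2_{g_i}V_i\,u_i$ and the generally nonlinear $\tilde{d}(u_i)$ --- so Sontag's formula does not apply directly and one must instead show that the set-valued map $\mathbf{x}_i^+\mapsto S(\mathbf{x}_i^+)$ admits a sufficiently regular selection. A closely related subtlety is the approximation error $\epsilon$ of Assumption~\ref{assume:u_dot_func_u}: strictly, $\phi^2_i\le 0$ holds only up to an $O(\|\epsilon\|)$ perturbation, so the comparison chain really yields an ISS-type estimate converging to an $O(\|\epsilon\|)$ residual ball, and recovering genuine exponential stabilization of $V_i$ relies on the ``$\|\epsilon\|$ sufficiently small'' clause (equivalently, one may state the conclusion as practical exponential stability). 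I would also flag that an \emph{exponential} --- not merely asymptotic --- rate is exactly what forces $\sigma^1_i,\sigma^2_i$ to be linearly lower bounded, and that having every agent simultaneously play its prescribed neighbor inputs $u_{\mathcal{N}_i^+}$ is an interconnection-consistency question beyond this single-agent statement.
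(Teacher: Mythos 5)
Your proof is correct and reaches the paper's conclusion, but it executes the middle step differently and more quantitatively. The paper's own proof is shorter and more qualitative: from \eqref{eq:collab_conv_cond} it argues only that whenever $\phi^1_i(\mathbf{x}_i,u_i)=0$ there exist inputs making $\dot\phi^1_i\le 0$, concludes that $\phi^1_i$ is ``monotonically decreasing'' (really: that the sublevel set $\{\phi^1_i\le 0\}$ is forward invariant), infers from this that $V_i$ satisfies Definition~\ref{def:clf} and is therefore a CLF yielding global asymptotic stability, and then invokes the quadratic sandwich \eqref{eq:cclf_exp_stab_cond} to upgrade to exponential stability. You instead run the full comparison-lemma cascade $\phi^2_i\le 0\Rightarrow\phi^1_i(t)\le\max\{\phi^1_i(0),0\}e^{-\lambda_2 t}\Rightarrow V_i(t)\le(aV_i(0)+b|\dot V_i(0)|)e^{-\lambda t}$, which buys two things the paper's argument does not: it handles initial conditions with $\phi^1_i(0)>0$ (the paper implicitly assumes the first-order condition already holds), and it produces an explicit exponential rate directly rather than routing through asymptotic stability plus the quadratic bounds. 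The difficulties you flag are genuine and are exactly the points the paper's proof glosses over: the selection of a regular feedback from the non-affine set $S(\mathbf{x}_i^+)$ (the paper sidesteps this because the proposition only asserts existence of input signals, not of a feedback law), and the $O(\|\epsilon\|)$ residual from Assumption~\ref{assume:u_dot_func_u}, which strictly speaking degrades the conclusion to practical exponential stability unless $\epsilon$ is taken to zero. Your version is the more defensible proof of the two.
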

\begin{proof}
    If $V_i$ is a CCLF, then by Definition~\ref{def:cclf} and \eqref{eq:HOCLF_funcs} we have that there exist class-$\mathcal{K}$ functions $\sigma_i^1$ and $\sigma_i^2$ such that 
    \begin{equation*}
        \begin{aligned}
            \inf_{u_i, u_{\mathcal{N}_i^+}} \dot{\phi}^{1}_i(\mathbf{x}_i^+, u_i, u_{\mathcal{N}_i^+}, \epsilon) + \sigma^2_i(\phi^{1}_i(\mathbf{x}_i, u_i)) \leq 0,
        \end{aligned}
    \end{equation*}
    for all $\mathbf{x}_i^+$. Thus, if \eqref{eq:collab_conv_cond} holds for all $\mathbf{x}_i^+$, then for any $\mathbf{x}_i, u_i$ such that $\phi^{1}_i(\mathbf{x}_i, u_i) = 0$ there must also exist inputs $u_{\mathcal{N}_i^+}$ such that $\dot{\phi}^{1}_i(\mathbf{x}_i^+, u_i, u_{\mathcal{N}_i^+}, \epsilon) \leq 0$, making $\phi^{1}_i(\mathbf{x}_i, u_i)$ monotonically decreasing. Therefore, by Definition~\ref{def:clf}, $V_i$ must also be a CLF, thereby making $V_i$ globally asymptotically stable, with exponential stability following directly from \eqref{eq:cclf_exp_stab_cond} in Definition~\ref{def:cclf}.  
\end{proof}

Note that the definition of a CCLF is not inherently useful for stabilizing $V_i$, as satisfying the simpler condition of the control Lyapunov function in Definition~\ref{def:clf} would be sufficient to prove global exponential stability of $V_i$. However, what the CCLF does provide is a condition by which we can simultaneously evaluate the contribution of both $u_i$ and $u_{\mathcal{N}_i^+}$ to the stabilization of $V_i$, which we can now use to construct the necessary cost functions for evaluating the benefit of altruistic actions of agents with respect to goals of other agents.     

\subsection{Collaborative Conditions for Altruism}
Given the definition of a CCLF for agent~$i$, we can examine the components of the critical condition in \eqref{eq:collab_conv_cond} by organizing the terms according those terms relating to $u_i$ and $u_{\mathcal{N}_i^+}$ as follows
\begin{equation}
    \phi^2_i(\mathbf{x}_i^+, u_i, u_{\mathcal{N}_i}) = \sum_{j \in \mathcal{N}_i^+} a_{ij}(\mathbf{x}_i, \mathbf{x}_j, u_j) + b_i(\mathbf{x}_i, u_i, \dot{u}_i), 
\end{equation}
where
\begin{equation}
    a_{ij}(\mathbf{x}_i, \mathbf{x}_j, u_j) = \mathcal{L}_{f_j} \mathcal{L}_{f_i} V_i(\mathbf{x}_i, \mathbf{x}_j) + \mathcal{L}_{g_j}\mathcal{L}_{f_i} V_i(\mathbf{x}_i)u_j
\end{equation}
and
\begin{equation} \label{eq:capability_node_i}
    \begin{aligned}
        b_i(\mathbf{x}_i,u_i,\dot{u}_i) &= u_i^\top \mathcal{L}^2_{g_i} V_i(x_i) u_i + q_i(\mathbf{x}_i) u_i \\ 
        & \quad + \mathcal{L}_{g_i} V_i(x_i)\dot{u}_i  + c_i(\mathbf{x}_i).
    \end{aligned}
\end{equation}
Although choosing an appropriate class-$\mathcal{K}$ function is non-trivial in the design of CLFs, to simplify the derivation of \eqref{eq:capability_node_i} in this paper, we choose the set of linear class-K functions.

\begin{assumption}\label{assumne:linear_class_K}
    Let $\sigma_i^k(z) := \sigma_i^k z$, where $z \in \mathbb{R}^{N_i}$ and $\sigma_i^k \in \mathbb{R}_{\geq 0}$.
\end{assumption}

\noindent
Under Assumption~\ref{assumne:linear_class_K}, we derive $q_i(\mathbf{x}_i)$ and $c_i(\mathbf{x}_i)$ as

\small
\begin{equation}
    q_i(\mathbf{x}_i) = \mathcal{L}_{f_i} \mathcal{L}_{g_i}V_i(\mathbf{x}_i)^\top + \mathcal{L}_{g_i} \mathcal{L}_{f_i}V_i(\mathbf{x}_i) + (\sigma_i^{1} + \sigma_i^{2}) \mathcal{L}_{g_i}V_i(x_i)
\end{equation}
\normalsize
and
\begin{equation}
    \begin{aligned}
        c_i(\mathbf{x}_i) &= \mathcal{L}^2_{f_i} V_i(\mathbf{x}_i) 
        % \\ 
        % & \quad 
        + (\sigma_i^{1} + \sigma_i^{2}) \mathcal{L}_{f_i}V_i(\mathbf{x}_i) + \sigma_i^{1}\sigma_i^{2} V_i(x_i).
    \end{aligned}
\end{equation}
This collection of notation helps with separating terms relating to the effect of agent~$i$ on reaching its goals versus the impact of neighbors on agent~$i$ reaching its goal, where we can express the collaborative condition in \eqref{eq:collab_conv_cond} identically as
\begin{equation}
    \sum_{j \in \mathcal{N}_i^+} a_{ij}(\mathbf{x}_i,\mathbf{x}_j,u_j) + b_i(\mathbf{x}_i, u_i, \dot{u}_i) \leq 0.
\end{equation}

Considering the effect that agent~$i$ has on the goal of agent~$j$ in the context of the CCLF, we can now formulate an altruistic rule similarly to Hamilton's rule, where we now have the ability to describe the benefit of input $u_i$ to the goal-reaching of agent~$j$ in the second-order dynamics as
\begin{equation}
    B_{j}(\mathbf{x}_j,\mathbf{x}_i,u_i) := - a_{ji}(\mathbf{x}_j, \mathbf{x}_i, u_i) 
    % {- \mathcal{L}_{f_i} \mathcal{L}_{f_j} V_j(\mathbf{x}_i)}.
\end{equation}
We negate this because a lower value of $a_{ji}(\mathbf{x}_j, \mathbf{x}_i, u_i)$ is more beneficial to the goal-reaching of agent~$j$ (i.e., more negative values denote faster convergence of $V_j$ to the origin). Naturally, we use the remaining terms of the CCLF to comprise the cost of input $u_i$ to agent~$i$ as
\begin{equation}
    C_i(\mathbf{x}_i,u_i,\dot{u}_i) := b_i(\mathbf{x}_i,u_i,\dot{u}_i),
\end{equation}
where we do not negate this term because a more positive value does indeed translate to a higher cost to agent~$i$. Thus, for two agents, we now have a Hamilton's rule-like condition, where any input $u_i$ that satisfies
\begin{equation}
   b_i(\mathbf{x}_i,u_i,\dot{u}_i) + r_{ij} a_{ji}(\mathbf{x}_j, \mathbf{x}_i,u_i) \leq 0  
\end{equation}
will increase the inclusive productivity of both agents according the importance $w_i, w_j$ assigned each task, which gives the relatedness term $r_{ij} = \frac{w_j}{w_i}$. Considering an arbitrary number of agents, we can account for the total effect of input $u_i$ on all neighbors $j \in \mathcal{N}_i^-$ as
\begin{equation}\label{eq:hamiltons_rule_CCLF}
    b_i(\mathbf{x}_i,u_i,\dot{u}_i) + \sum_{j \in \mathcal{N}_i^-} r_{ij} a_{ji}(\mathbf{x}_j,\mathbf{x}_i,u_i) \leq 0.
\end{equation}

We first examine the effect of \eqref{eq:hamiltons_rule_CCLF} on the total convergence of a multi-agent system under the following assumption.

\begin{assumption} \label{assume:undirected_graph}
    Let $\mathcal{N}_i^+ = \mathcal{N}_i^-$, for all $i \in [n]$.
\end{assumption}
\noindent
In other words, for our analysis, we assume that the {graph dynamics} for the system is undirected, which yields the following result on the total convergence of the system under the notion of altruistic decision-making.

\begin{theorem} \label{thm:sum_phi_0}
    If Assumption~\ref{assume:undirected_graph} and \eqref{eq:hamiltons_rule_CCLF} hold for all $i \in  [n]$, then
    \begin{equation}
        \sum_{i \in [n]} w_i \phi^2_i(\mathbf{x}_i^+, u_i, u_{\mathcal{N}_i}) \leq 0.
    \end{equation}
\end{theorem}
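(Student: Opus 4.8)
The plan is a straightforward edge-reindexing (discrete Fubini) argument that exploits the symmetry granted by Assumption~\ref{assume:undirected_graph}, followed by a termwise application of the collaborative Hamilton's rule~\eqref{eq:hamiltons_rule_CCLF}. First I would expand the weighted sum using the decomposition $\phi^2_i(\mathbf{x}_i^+, u_i, u_{\mathcal{N}_i}) = \sum_{j \in \mathcal{N}_i^+} a_{ij}(\mathbf{x}_i, \mathbf{x}_j, u_j) + b_i(\mathbf{x}_i, u_i, \dot{u}_i)$ to split
\begin{equation*}
\sum_{i \in [n]} w_i \phi^2_i = \sum_{i \in [n]} w_i\, b_i(\mathbf{x}_i, u_i, \dot{u}_i) + \sum_{i \in [n]} \sum_{j \in \mathcal{N}_i^+} w_i\, a_{ij}(\mathbf{x}_i, \mathbf{x}_j, u_j).
\end{equation*}

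The key step is to reindex the double sum. Because Assumption~\ref{assume:undirected_graph} makes the graph dynamics undirected, the edge set $\mathcal{E}$ is symmetric, so $j \in \mathcal{N}_i^+$ if and only if $i \in \mathcal{N}_j^+$ if and only if $j \in \mathcal{N}_i^-$; hence summing over all ordered pairs $(i,j)$ with $j \in \mathcal{N}_i^+$ coincides, after relabeling the dummy indices $i \leftrightarrow j$, with summing over all pairs with $j \in \mathcal{N}_i^-$. This rewrites the double sum as $\sum_{i \in [n]} \sum_{j \in \mathcal{N}_i^-} w_j\, a_{ji}(\mathbf{x}_j, \mathbf{x}_i, u_i)$, where now each $a_{ji}$ term carries agent~$i$'s own input $u_i$, matching exactly the structure appearing in~\eqref{eq:hamiltons_rule_CCLF}. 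Recombining node by node and factoring the positive weight $w_i$ out of the $i$-th group gives
\begin{align*}
\sum_{i \in [n]} w_i \phi^2_i &= \sum_{i \in [n]} w_i \left( b_i(\mathbf{x}_i, u_i, \dot{u}_i) + \sum_{j \in \mathcal{N}_i^-} \frac{w_j}{w_i}\, a_{ji}(\mathbf{x}_j, \mathbf{x}_i, u_i) \right) \\
&= \sum_{i \in [n]} w_i \left( b_i(\mathbf{x}_i, u_i, \dot{u}_i) + \sum_{j \in \mathcal{N}_i^-} r_{ij}\, a_{ji}(\mathbf{x}_j, \mathbf{x}_i, u_i) \right),
\end{align*}
using $r_{ij} = w_j / w_i$. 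By hypothesis~\eqref{eq:hamiltons_rule_CCLF} every parenthesized expression is nonpositive, and since $w_i > 0$ for all $i$, each summand is nonpositive, so the total is $\leq 0$, which is the claim.

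I expect no serious obstacle here: the result is essentially a bookkeeping identity. The one place that warrants care is the reindexing step — one must check that in $a_{ij}$ the state \emph{and} input arguments swap consistently (the $a_{ij}$ term inside $\phi^2_i$ depends on the neighbor's input $u_j$, whereas after relabeling it must become an $a_{ji}$ term depending on agent~$i$'s input $u_i$, precisely as it enters the altruism condition), and that Assumption~\ref{assume:undirected_graph} is genuinely what licenses the identification of $\mathcal{N}_i^+$ with $\mathcal{N}_i^-$; were the graph dynamics directed, these index sets would differ and the cancellation would break, which is exactly why the theorem is stated under that assumption.
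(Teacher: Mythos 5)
Your proof is correct and takes essentially the same route as the paper's: both arguments hinge on the same reindexing of the weighted neighbor-coupling terms (the paper applies \eqref{eq:hamiltons_rule_CCLF} to bound each $\phi^2_i$ first and then shows the weighted $a$-terms cancel pairwise, whereas you reindex first so that $\sum_i w_i \phi^2_i$ is recognized as exactly $\sum_i w_i$ times the left-hand side of \eqref{eq:hamiltons_rule_CCLF} and then conclude termwise). One quibble with your closing remark: the identity $\sum_{i}\sum_{j\in\mathcal{N}_i^+} w_i\, a_{ij}(\mathbf{x}_i,\mathbf{x}_j,u_j)=\sum_{i}\sum_{j\in\mathcal{N}_i^-} w_j\, a_{ji}(\mathbf{x}_j,\mathbf{x}_i,u_i)$ is a pure relabeling over the ordered edge set (since $j\in\mathcal{N}_i^-$ iff $i\in\mathcal{N}_j^+$ by definition, with or without symmetry), so the cancellation would not actually break for directed graph dynamics and Assumption~\ref{assume:undirected_graph} is not load-bearing in your argument.
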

\begin{proof}
    For a given agent $i \in [n]$, adding $\sum_{j \in \mathcal{N}_i^+} a_{ij}(\mathbf{x}_i, \mathbf{x}_j, u_j)$ to \eqref{eq:hamiltons_rule_CCLF} yields
    \begin{equation*}
        \begin{aligned}  
            &\sum_{j \in \mathcal{N}_i^+} a_{ij}(\mathbf{x}_i,\mathbf{x}_j,u_j) - \sum_{j \in \mathcal{N}_i^-} r_{ij} a_{ji}(\mathbf{x}_j, \mathbf{x}_i, u_i) \\
            &\geq 
            b_i(\mathbf{x}_i,u_i,\dot{u}_i) + \sum_{j \in \mathcal{N}_i^+} a_{ij}(\mathbf{x}_i,\mathbf{x}_j,u_j) \\ 
            &= \phi^2_i(\mathbf{x}_i^+, u_i, u_{\mathcal{N}_i}). 
        \end{aligned}
    \end{equation*}
    Thus, we have that
    \small
    \begin{equation} \label{eq:sum_phi}
        \begin{aligned}  
            &
            \sum_{i \in [n]} w_i \phi^2_i(\mathbf{x}_i^+, u_i, u_{\mathcal{N}_i})   
            \\ 
            &
            \leq \sum_{i \in [n]} w_i \left( \sum_{j \in \mathcal{N}_i^+} a_{ij}(\mathbf{x}_i,\mathbf{x}_j,u_j) - \sum_{j \in \mathcal{N}_i^-} r_{ij} a_{ji}(\mathbf{x}_j, \mathbf{x}_i, u_i) \right).
        \end{aligned}
    \end{equation}
    \normalsize
    Therefore, under Assumption~\ref{assume:undirected_graph}, and using the fact that $r_{ij} = w_j/w_i$, we can rewrite the right-hand side of \eqref{eq:sum_phi} as
    \begin{equation*}
        \begin{aligned}
            % &
            \sum_{i \in [n]} \left( \sum_{j \in \mathcal{N}_i} w_i a_{ij}(\mathbf{x}_i,\mathbf{x}_j,u_j) -  w_j a_{ji}(\mathbf{x}_j, \mathbf{x}_i, u_i) \right) = 0,
            % \\
            % &
            % = \sum_{i \in [n]} \left( \sum_{j \in \mathcal{N}_i} (1-r_{ij}) a_{ji}(\mathbf{x}_j, \mathbf{x}_i, u_i)  \right) = 0,
        \end{aligned}
    \end{equation*}
    where the equality holds since, in the sum of terms for all agents, each weighted term $w_i a_{ij}(\mathbf{x}_i,\mathbf{x}_j,u_j)$ is canceled out by a corresponding negative term. Thus, we have that $\sum_{i \in [n]} w_i \phi^2_i(\mathbf{x}_i^+, u_i, u_{\mathcal{N}_i}) \leq 0$.   
\end{proof}

In other words, Theorem~\ref{thm:sum_phi_0} states that, under the altruistic rule defined by \eqref{eq:hamiltons_rule_CCLF}, the \textit{weighted total goal-reaching} of the system will be convergent (i.e., any agent movement away from their goal will be counteracted by another agent moving toward their goal, as weighted by their relative importance).

\section{Simulations} \label{sec:simulations}

In this section, we revisit our example from Section~\ref{sec:hamilton_rule} and apply the altruistic framework proposed in Section~\ref{sec:altruism_via_collab}. To fully realize a computable expression for \eqref{eq:hamiltons_rule_CCLF} for this example, we first observe that 

\begin{equation}\label{eq:Lgi2_eq_0}
    \mathcal{L}^2_{g_i} V_i(x_i) = \mathbf{0}^{M_i \times M_i}, \forall i \in [n].
\end{equation}
This property holds for any systems where $g_i$ is constant (i.e., robots that implement control uniformly regardless of their position in a physical space). Thus, for agents with dynamics defined by \eqref{eq:uncomfortable_dyn}, any input $u_i$ that satisfies
% \footnotesize
\begin{equation} \label{eq:altruism_cond_u_i}
    \begin{aligned}
        \left( \sum_{j \in \mathcal{N}_i^-} r_{ij} \mathcal{L}_{g_i}\mathcal{L}_{f_j} V_j(\mathbf{x}_j) + q_i(\mathbf{x}_i) \right) u_i + \mathcal{L}_{g_i} V_i(x_i) \tilde{d}(u_i) \\ 
        \leq - \sum_{j \in \mathcal{N}_i^-} r_{ij} \mathcal{L}_{f_i} \mathcal{L}_{f_j} V_j(\mathbf{x}_j, \mathbf{x}_i) - c_i(\mathbf{x}_i) - \mathcal{L}_{g_i} V_i(x_i)\epsilon
    \end{aligned}
\end{equation}
% \normalsize
also satisfies \eqref{eq:hamiltons_rule_CCLF}.
Further, if the estimation for $\dot{u}_i$ is affine in $u_i$ (e.g., $\tilde{d}(u_i) = \frac{u_{0} - u_i}{dt}$), then \eqref{eq:altruism_cond_u_i} becomes a linear constraint with respect to $u_i$ and solutions to \eqref{eq:problem_statement} can be found efficiently using a quadratic program.

We define a CLF candidate function for each agent as 
\begin{equation}
    V_i(x_i) = \frac{1}{2} \Vert x_i^* - x_i \Vert^2,
\end{equation}
where $x_i^*$ is the goal-point for agent~$i$. At runtime, each agent computes a nominal control action according to the CLF condition in~\eqref{eq:CLF_sigma}, which is then filtered according to the altruism condition in~\eqref{eq:altruism_cond_u_i}. In Figure~\ref{fig:hamilitons_rule_example_8agent}, we show the trajectories of a simulated 8-agent system in two cases, where in Figure~\ref{fig:all_eq} we set $w_i=1$ for all agents and in Figure~\ref{fig:blue_g_orange_g_all} we increase the importance of the blue and orange agents to $w_{\text{blue}}=10^6$ and $w_{\text{orange}}=10^3$, respectively. In the case of greater importance assigned to the blue and orange agents in Figure~\ref{fig:blue_g_orange_g_all}, we see that all other agents modify their moments to not hinder the progression of the blue agent towards its goal. Furthermore, we show in Figure~\ref{fig:phi2} that Theorem~\ref{thm:sum_phi_0} holds even while some agents are moved away from their goal in reaction to more agents with greater assigned importance. 

\begin{figure}
    \centering
    \begin{subfigure}[b]{0.45\columnwidth}
         \includegraphics[width=\textwidth]{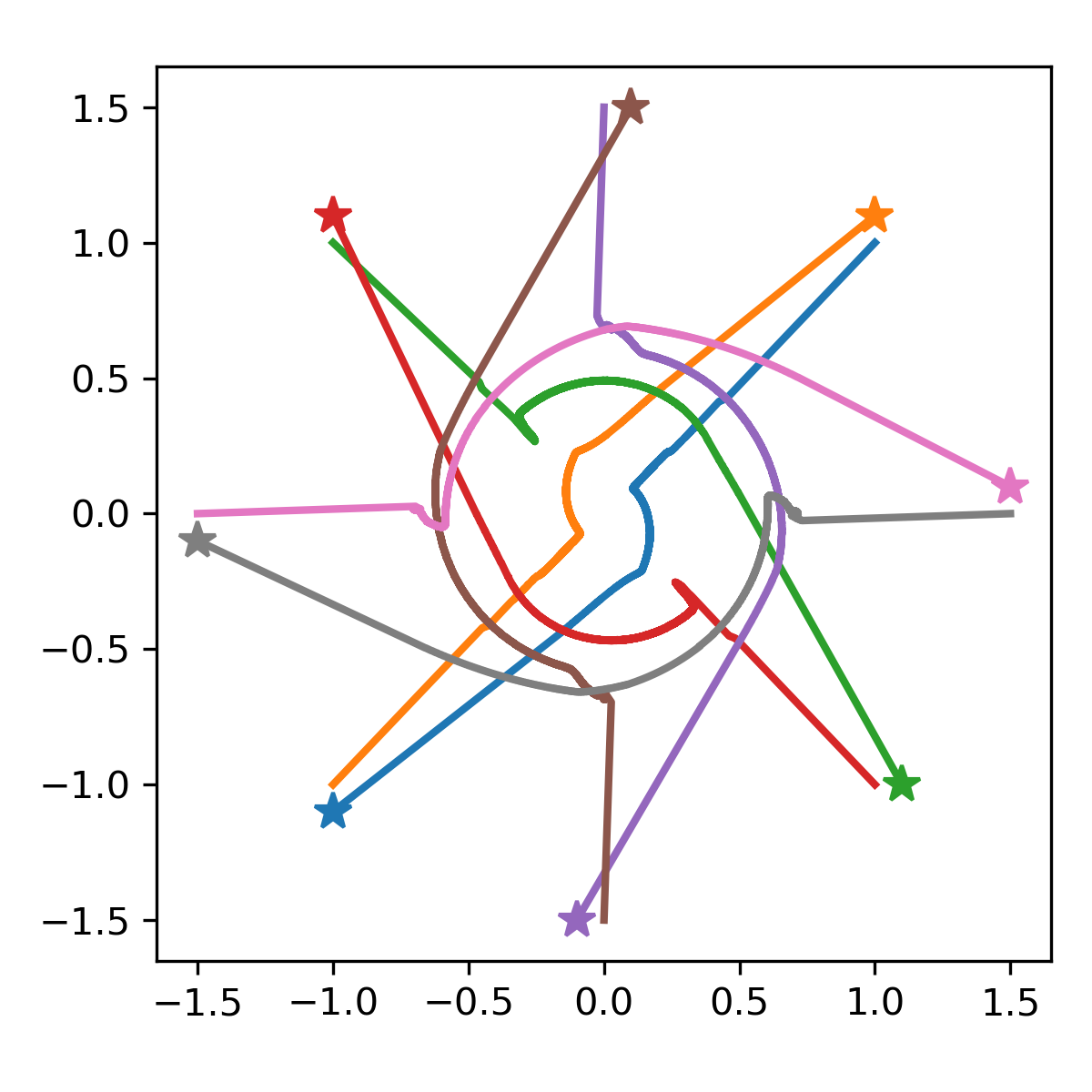}
         \caption{$w_{i} = w_{j} = 1, \forall i,j$}
         \label{fig:all_eq}
    \end{subfigure}
    \begin{subfigure}[b]{0.45\columnwidth}
         \includegraphics[width=\textwidth]{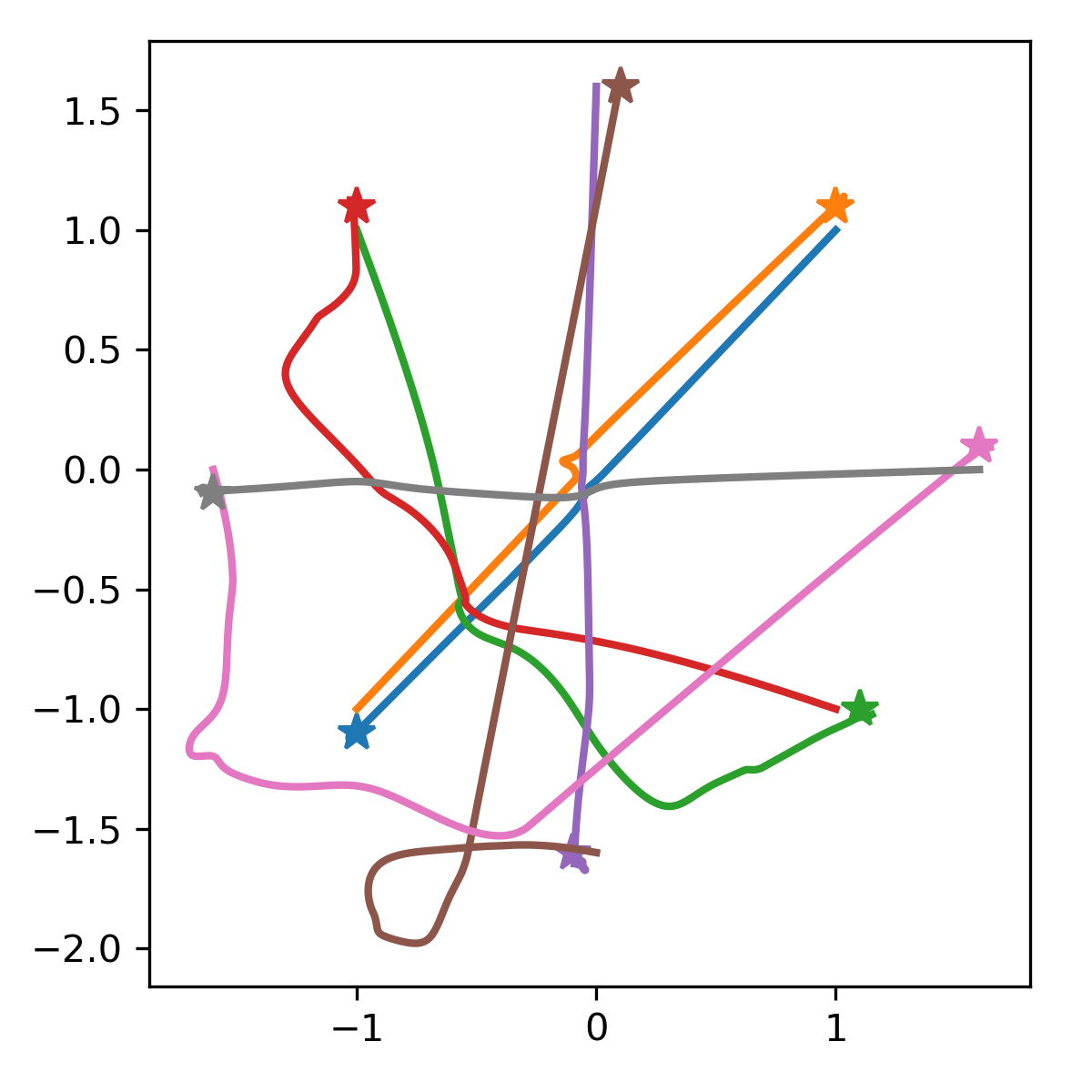}
         \caption{$w_{\text{blue}} \gg w_{\text{orange}} \gg 1$}
         \label{fig:blue_g_orange_g_all}
    \end{subfigure}
    \caption{
    Two examples of trajectories for an 8-agent system where a star of the corresponding color shows each agent's goal, with agent dynamics defined by \eqref{eq:uncomfortable_dyn} subject to the altruism condition defined by \eqref{eq:hamiltons_rule_CCLF}. In Figure~\ref{fig:all_eq}, all agents are assigned equal importance values, whereas in Figure~\ref{fig:blue_g_orange_g_all} we assign the greatest importance to the blue and orange agents, where $w_{\text{blue}}=10^6$ and $w_{\text{orange}}=10^3$, with all other agent importance being set at $1$.
    }\label{fig:hamilitons_rule_example_8agent}
\end{figure}

\begin{figure}
    \centering
    \includegraphics[width=.95\columnwidth]{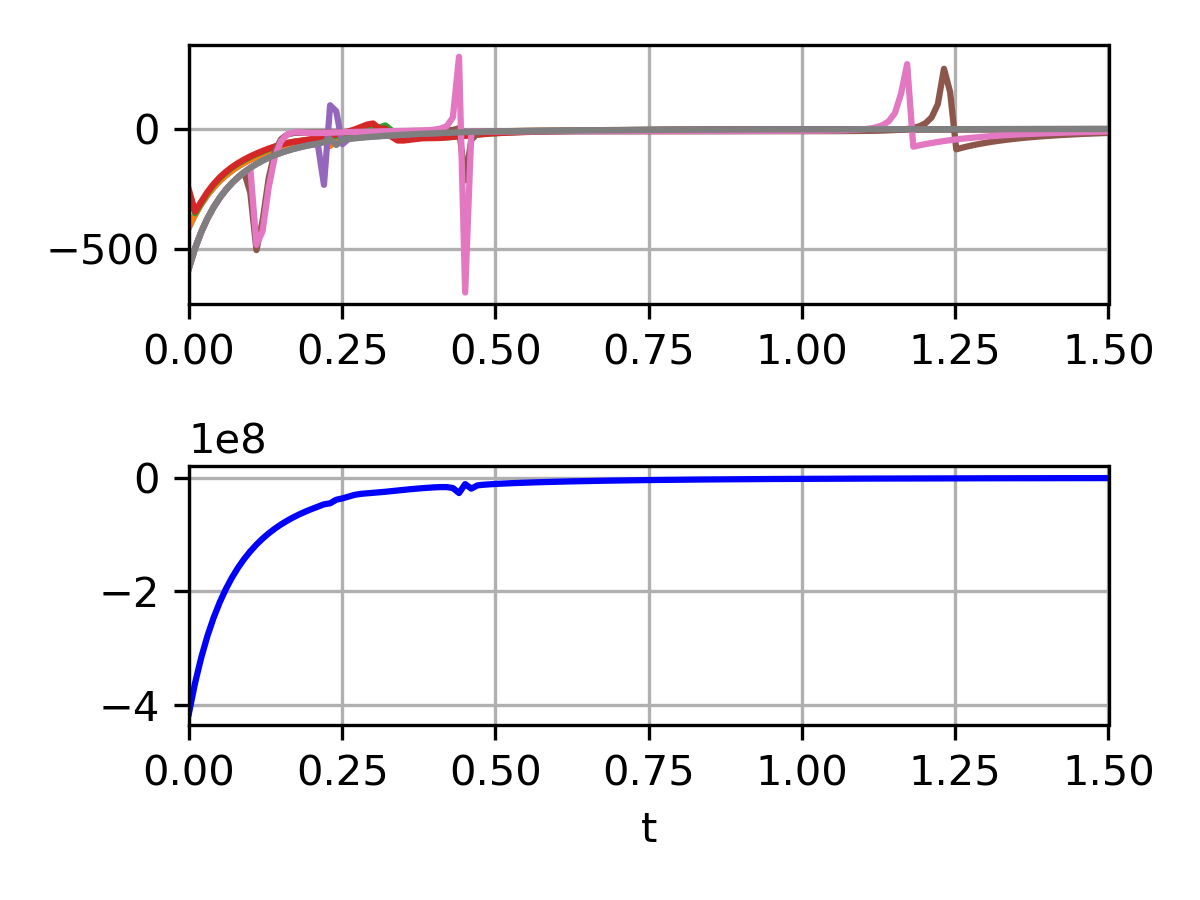}
    \caption{
    Plots of $\phi^2_i(\mathbf{x}_i^+, u_i, u_{\mathcal{N}_i})$ for all $i \in[n]$ (top) and $\sum_{i \in [n]} w_i \phi^2_i(\mathbf{x}_i^+, u_i, u_{\mathcal{N}_i})$ (bottom) over time for the 8-agent simulation shown in Figure~\ref{fig:blue_g_orange_g_all}. Note that, despite some agents moving away from their goal, Theorem~\ref{thm:sum_phi_0} still holds for all time.
    }
    \label{fig:phi2}
\end{figure}

\section{Conclusion} \label{sec:conclusion}
In the paper, we have investigated the application of Hamilton's rule to construct an ecology-inspired altruistic decision-making framework for autonomous multi-agent systems, where agent fitness is considered in the context of agent productivity. We have proposed a general framework that utilizes principles from certificate-based control methods to define collaborative control Lyapunov functions, which provide a metric by which each agent may compute the effect that it has on other agents' goal reaching, as defined by a Lyapunov-like condition. Further, we have proven that when agents adhere to altruistic decision-making principles, the weighted total goal-reaching of the system, as measured by the relative importance of each agent, is convergent. Some important directions for future work include extending our notion of agent goal-reach beyond the instantaneous effect of agent decision and instead considering predictive planning while adhering to altruistic conditions.

\section*{{Acknowledgment}}
{We would like to thank Scott Nivison for his helpful discussions and guidance in this work.}

\normalem
\bibliographystyle{IEEEtran}
\bibliography{references}

\end{document}